\documentclass[letterpaper, 10 pt, conference]{ieeeconf}  

\IEEEoverridecommandlockouts                              

\usepackage{xcolor}
\def\BibTeX{{\rm B\kern-.05em{\sc i\kern-.025em b}\kern-.08em
    T\kern-.1667em\lower.7ex\hbox{E}\kern-.125emX}}

\usepackage{textgreek} 
\usepackage{amsmath,amssymb,amsfonts,mathrsfs,commath,mathtools}
\usepackage{algorithmic}
\usepackage{algorithm}
\usepackage{array}
\usepackage[caption=false,font=normalsize,labelfont=sf,textfont=sf]{subfig}
\usepackage{textcomp}
\usepackage{stfloats}
\usepackage{url}
\usepackage{verbatim}
\usepackage{graphicx}
\usepackage{cite}
\usepackage{tabularx}

\usepackage{setspace}
\usepackage{subfloat}
\usepackage{mathtools}
\usepackage{amsmath,amsfonts,amssymb,mathrsfs, bm}
\usepackage[utf8]{inputenc}
\usepackage[swedish, english]{babel}
\usepackage{csquotes}
\usepackage{subfig}
\usepackage{graphicx}
\usepackage{supertabular}                                             
\usepackage{booktabs}

\allowdisplaybreaks

\DeclareMathOperator{\RE}{Re}

\DeclareMathOperator{\esup}{ess\, sup}



\providecommand*{\eu}%
{\ensuremath{\mathrm{e}}}
\providecommand*{\im}%
{\ensuremath{\mathrm{j}}}
\providecommand*{\GammaF}%
{\ensuremath{\mathrm{\Gamma}}}
\providecommand*{\BetaF}%
{\ensuremath{\mathrm{\Beta}}}
\usepackage{ulem}

\newtheorem{lemma}{Lemma}[section]
\newtheorem{theorem}{Theorem}[section]

\usepackage[table]{xcolor} 
\usepackage{hyperref}
\hypersetup{
    colorlinks = true,
    linkbordercolor = {white},
            linkcolor=blue,
            bookmarks=true,
            filecolor=blue,
            urlcolor=blue,
            citecolor=blue
}

\title{\LARGE \bf
Inverse-dynamics observer design for a linear single-track\\ vehicle model with distributed tire dynamics}

\author{Luigi Romano$^{1}$, Ole Morten Aamo$^{2}$, Jan Åslund$^{3}$, and Erik Frisk$^{3}$
\thanks{This research was financially supported by the project FASTEST (Reg. no. 2023-06511), funded by the Swedish Research Council.}
\thanks{$^{1}$Luigi Romano is with the Department of Electrical Engineering, Linköping University, Linköping, Sweden, and the Department of Engineering Cybernetics, NTNU, Trondheim, Norway
        {\tt\small luigi.romano@liu.se}}%
\thanks{$^{2}$Ole Morten Aamo is with the Department of Engineering Cybernetics, NTNU, Trondheim, Norway
        {\tt\small aamo@ntnu.no}}%
\thanks{$^{3}$Jan Åslund and Erik Frisk are with the Department of Electrical Engineering, Linköping University, Linköping
       {\tt\small jan.aslund@liu.se}; {\tt\small erik.frisk@liu.se}}%
}

\begin{document}

\maketitle
\thispagestyle{empty}
\pagestyle{empty}


\begin{abstract}
Accurate estimation of the vehicle's sideslip angle and tire forces is essential for enhancing safety and handling performances in unknown driving scenarios. To this end, the present paper proposes an innovative observer that combines a linear single-track model with a distributed representation of the tires and information collected from standard sensors. In particular, by adopting a comprehensive representation of the tires in terms of hyperbolic partial differential equations (PDEs), the proposed estimation strategy exploits dynamical inversion to reconstruct the lumped and distributed vehicle states solely from yaw rate and lateral acceleration measurements. Simulation results demonstrate the effectiveness of the observer in estimating the sideslip angle and tire forces even in the presence of noise and model uncertainties.
\end{abstract}


\section{Introduction}
\label{sec:introduction}
Accurate estimation of vehicle sideslip angle and tire forces is essential for enhancing stability, safety, and control performance in modern automotive systems \cite{Lex1,Lex2,Sideslip1,Sideslip2,Barys1}. Reliable state estimation enables advanced driver-assistance and autonomous driving functions to maintain optimal performance even under uncertain road and operating conditions \cite{Barys2,Emilia1,Emilia2}. However, the distributed nature of the tire–road interaction introduces significant modeling and estimation challenges, particularly during transient maneuvers where conventional lumped tire models fail to capture spatially distributed friction dynamics.

Inspired by the design strategies presented in \cite{AuriolX1,AuriolX2,AuriolX3,AuriolX4,InversionMio}, this paper proposes an inverse-dynamics observer for a linear single-track vehicle model incorporating distributed tire dynamics. The tire behavior is represented through linear hyperbolic partial differential equations (PDEs), which are coupled with the ordinary differential equation (ODE) describing the vehicle's rigid body dynamics, resulting in a homodirectional ODE-PDE interconnection \cite{SemilinearV}. By inverting the dynamics of the PDE subsystem, which is inherently stable \cite{TireNonDyn}, the proposed observer reconstructs the lumped and distributed vehicle states using only standard sensor measurements of yaw rate and lateral acceleration. In particular, theoretical analysis establishes stability conditions through frequency-domain arguments, and the proposed injection gain is designed to guarantee robust convergence in the presence of model uncertainties.

The observer's performance is tested through numerical simulations, confirming that the proposed approach can accurately estimate the sideslip angle and the tire forces, maintaining robustness against sensor noise and parameter variations. Furthermore, simulation results illustrate the ability of the observer to reconstruct the distributed tire states, which cannot be achieved when traditional lumped models are employed.

The paper bridges a novel vehicular application with recent advances in control and estimation of distributed parameter systems. To the best of the authors’ knowledge, it represents the first contribution proposing an observer capable of reconstructing both lumped and distributed vehicle states using only standard on-board measurements.

The remainder of this manuscript is organized as follows: Section \ref{sect:Problem} introduces the vehicle model considered in the paper. The state observer for vehicle state estimation is then synthesized in \ref{sect:observerDesign}. Section \ref{sect:example} presents numerical experiments, whereas Section \ref{ref:concl} concludes with final remarks.

\subsection{Notation}\label{sect:Notation}
In this paper, $\mathbb{R}$ denotes the set of real numbers; $\mathbb{R}_{>0}$ and $\mathbb{R}_{\geq 0}$ indicate the set of positive real numbers excluding and including zero, respectively. Similarly, $\mathbb{C}$ is the set of complex numbers; $\mathbb{C}_{>0}$ and $\mathbb{C}_{\geq 0}$ denote the sets of all complex numbers whose real part is larger than and larger than or equal to zero, respectively. The set of integers is denoted by $\mathbb{Z}$.
The group of $n\times m$ matrices with values in $\mathbb{F}$ ($\mathbb{F} = \mathbb{R}$, $\mathbb{R}_{>0}, \mathbb{R}_{\geq0}$, or $\mathbb{C}$) is denoted by $\mathbf{M}_{n\times m}(\mathbb{F})$ (abbreviated as $\mathbf{M}_{n}(\mathbb{F})$ whenever $m=n$). For $\mathbb{F} = \mathbb{R}$, $\mathbb{R}_{>0}$, or $\mathbb{R}_{\geq0}$, $\mathbf{GL}_n(\mathbb{F})$ represents the group of invertible matrices with values in $\mathbb{F}$; the identity matrix on $\mathbb{R}^n$ is indicated with $\mathbf{I}_n$.
The standard Euclidean norm on $\mathbb{R}^n$ is indicated with $\norm{\cdot}_{2}$; matrix norms are simply denoted by $\norm{\cdot}$.
$L^2((0,1);\mathbb{R}^n)$ denotes the Hilbert space of square-integrable functions on $(0,1)$ with values in $\mathbb{R}^n$, endowed with inner product $\langle \bm{\zeta}_1, \bm{\zeta}_2 \rangle_{L^2((0,1);\mathbb{R}^n)} = \int_0^1 \bm{\zeta}_1^{\mathrm{T}}(\xi)\bm{\zeta}_2(\xi) \dif \xi$ and induced norm $\norm{\bm{\zeta}(\cdot)}_{L^2((0,1);\mathbb{R}^n)}$. The Hilbert space $H^1((0,1);\mathbb{R}^n)$ indicates the space of square-integrable functions on $(0,1)$ whose weak derivative also belongs to $L^2((0,1);\mathbb{R}^{n})$; it is naturally equipped with norm $\norm{\bm{\zeta}(\cdot)}_{H^1((0,1);\mathbb{R}^n)}^2 \triangleq \norm{\bm{\zeta}(\cdot)}_{L^2((0,1);\mathbb{R}^n)}^2 + \norm{\pd{\bm{\zeta}(\cdot)}{\xi}}_{L^2((0,1);\mathbb{R}^n)}^2$. 
Given a generic Hilbert space $\mathcal{V}$, $L^2((0,T);\mathcal{V})$ and $C^0([0,T];\mathcal{V})$ denote respectively the spaces of square-integrable functions and continuous functions on $[0,T]$ with values in $\mathcal{V}$ (for $T = \infty$, the interval $[0,T]$ is identified with $[0,\infty)$).
The following notation is adapted from \cite{Zwarth,CurtainAutomatica}. Consider $C^\omega(\mathbb{C}_{>0};\mathbf{M}_{n\times m}(\mathbb{C}))$, the set of all analytic functions from $\mathbb{C}_{>0}$ to $\mathbb{C}^{n\times m}$; the Hardy space $H^\infty(\mathbb{C}_{>0};\mathbf{M}_{n\times m}(\mathbb{C}))$ is defined as $H^\infty(\mathbb{C}_{>0};\mathbf{M}_{n\times m}(\mathbb{C})) \triangleq \{ \mathbf{G} \in C^\omega(\mathbb{C}_{>0};\mathbf{M}_{n\times m}(\mathbb{C})) \mathrel{|} \norm{\mathbf{G}(\cdot)}_\infty < \infty\}$, where, for a proper and stable transfer function, $\mathbf{G} \in H^\infty(\mathbb{C}_{>0};\mathbf{M}_{n\times m}(\mathbb{C}))$, $\norm{\mathbf{G}(\cdot)}_\infty = \esup_{\omega \in \mathbb{R}}\overline{\sigma}(\mathbf{G}(\im \omega))$, with $\overline{\sigma}(\mathbf{G}(\im \omega))$ standing for the largest singular value of $\mathbf{G}(\im \omega)$ evaluated at $\omega$. Finally, the Laplace transform of a variable $v(t) \in \mathcal{V}$ is denoted by $\widehat{v}(s) = (\mathcal{L}v)(s)$, where $s \in \mathbb{C}$ is the Laplace variable.

\section{Problem statement}\label{sect:Problem}
This section is dedicated to introducing the considered ODE-PDE system, along with the available measurements.
\subsection{Structure of the considered ODE-PDE system}
\begin{figure}
\centering
\includegraphics[width=0.85\linewidth]{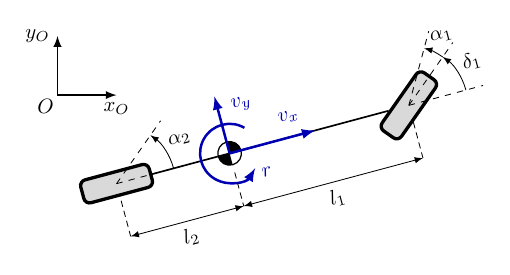} 
\caption{Schematic of the single track model.}
\label{figureForcePostdoc2}
\end{figure}
The model considered in this paper may be derived by linearizing one of the formulations presented in \cite{SemilinearV}, but it is also equivalent to that reported in \cite{MioVSD}.
By adopting a distributed Dahl friction model, the lateral dynamics of the linear single-track vehicle illustrated in Figure \ref{figureForcePostdoc2} may be described by the following homodirectional hyperbolic ODE-PDE interconnection \cite{SemilinearV}:
\begin{subequations}\label{eq:originalSystems}
\begin{align}
\begin{split}
& \dot{\bm{X}}(t) = \mathbf{A}_1\bm{X}(t) +\mathbf{G}(\mathscr{K}_1\bm{z})(t),\quad  t \in (0,T), 
\end{split} \label{eq:originalSystemsODE}\\
\begin{split}
& \dpd{\bm{z}(\xi,t)}{t} + \mathbf{\Lambda} \dpd{\bm{z}(\xi,t)}{\xi} = (\mathscr{K}_2\bm{z})(t) + \mathbf{A}_2\bm{X}(t) +\mathbf{B}\bm{\delta}(t),\\
& \qquad \qquad  \qquad  \qquad  \qquad  \quad (\xi,t) \in (0,1) \times (0,T),
\end{split} \label{eq:originalSystemsPDE} \\
& \bm{z}(0,t) = \bm{0}, \quad t \in (0,T),\label{eq:originalSystemsBC}
\end{align}
\end{subequations}
where $\mathbb{R}^{2} \ni \bm{X}(t) = [v_y(t)\; r(t)]^{\mathrm{T}}$ represents the lumped state vector, collecting the vehicle's lateral speed $v_y(t) \in \mathbb{R}$ and yaw rate $r(t) \in \mathbb{R}$, $\mathbb{R}^2 \ni \bm{z}(\xi,t) \triangleq [z_1(\xi,t) \; z_2(\xi,t)]^{\mathrm{T}}$ indicates the distributed state vector, describing the lateral deformation of bristles attached at the front and rear tires and schematizing tread elements, and the input $\mathbb{R}^2\ni \bm{\delta}(t) \triangleq [\delta_1(t)\; \delta_2(t)]^{\mathrm{T}}$ collects the steering angles at the tires, and the variable $\xi \in [0,1]$ is the nondimensional coordinate along the tires' contact patch lengths \cite{SemilinearV}. 

The matrices are $\mathbf{A}_1,\mathbf{A}_2, \mathbf{G}, \mathbf{B} \in \mathbf{M}_2(\mathbb{R})$ and $\mathbf{GL}_2(\mathbb{R}) \cap \mathbf{M}_2(\mathbb{R}) \ni \mathbf{\Lambda} \succ \mathbf{0}$ are given by
\begin{align}\label{eq:matrices}
\mathbf{A}_1 & \triangleq \begin{bmatrix} 0 & -v_x \\ 0 & 0 \end{bmatrix}, && \mathbf{A}_2 \triangleq 2\begin{bmatrix} \phi_1 & \phi_1l_1 \\ \phi_2 & -\phi_2l_2\end{bmatrix}, \nonumber \\
\mathbf{G} & \triangleq -\begin{bmatrix} \dfrac{1}{m} & \dfrac{1}{m} \\ \dfrac{l_1}{I_z} & -\dfrac{l_2}{I_z}\end{bmatrix}, && \mathbf{B} \triangleq-2v_x\begin{bmatrix} \phi_1 & 0 \\ 0 & \chi \phi_2\end{bmatrix}, \nonumber \\
\mathbf{\Lambda} & \triangleq \begin{bmatrix} \dfrac{v_x}{L_1} & 0 \\ 0 & \dfrac{v_x}{L_2}\end{bmatrix},
\end{align}
whereas the operators $(\mathscr{K}_1\bm{\zeta})$ and $(\mathscr{K}_2\bm{\zeta})$ satisfy $\mathscr{K}_1 \in \mathscr{L}(L^2((0,1);\mathbb{R}^2);\mathbb{R}^2)$ and $\mathscr{K}_2 \in \mathscr{L}(H^1((0,1);\mathbb{R}^2);\mathbb{R}^2)$, and read
\begin{align}\label{eq:operatorsK}
(\mathscr{K}_1\bm{\zeta}) & \triangleq \int_0^1 \mathbf{K}_1\bm{\zeta}(\xi)\dif \xi, &&  (\mathscr{K}_2\bm{\zeta}) \triangleq \mathbf{K}_2\bm{\zeta}(1),
\end{align}
with $\mathbf{K}_1,\mathbf{K}_2 \in \mathbf{M}_2(\mathbb{R})$ given by
\begin{align}\label{eq:matricesK}
\mathbf{K}_1 & \triangleq \begin{bmatrix} F_{z1}\sigma_1 & 0 \\ 0 & F_{z2}\sigma_2\end{bmatrix}, && \mathbf{K}_2 \triangleq v_x\begin{bmatrix} \dfrac{\psi_1}{L_1} & 0 \\ 0 & \dfrac{\psi_2}{L_2} \end{bmatrix}.
\end{align}
In \eqref{eq:matrices} and \eqref{eq:matricesK}, $v_x\in \mathbb{R}_{>0}$ represents the constant longitudinal speed of the vehicle, $m\in \mathbb{R}_{>0}$ and $I_z\in \mathbb{R}_{>0}$ denote respectively the vehicle mass and moment of inertia of the center of gravity around the vertical axis, $l_1, l_2 \in \mathbb{R}_{>0}$ are the front and rear axle lengths, $F_{z1}, F_{z2} \in \mathbb{R}_{>0}$ the vertical forces acting on the front and rear tires, $L_1,L_2 \in \mathbb{R}_{>0}$ the lengths of the front and rear tires' contact patches, and $\sigma_1,\sigma_2\in \mathbb{R}_{>0}$ are the normalized micro-stiffnesses of the front and rear tires. Finally, the constants $\phi_i \in (0,1]$ and $\psi_i \in [0,1)$ are structural parameters connected with the flexibility of the tire carcass, identically satisfying $\phi_i + \psi_i = 1$, $i \in \{1,2\}$, whereas $\chi \in \{0,1\}$ denotes the steering actuation at the rear axle \cite{SemilinearV}.

Well-posedness for the ODE-PDE system \eqref{eq:originalSystems} follows from an application of the Lumer-Phillips theorem (see, for instance, Theorem 1.4.3 and Corollary 4.2.5. in \cite{Pazy}). In particular, the Hilbert spaces $\mathcal{X}\triangleq \mathbb{R}^{2}\times L^2((0,1);\mathbb{R}^{2})$ and $\mathcal{Y} \triangleq \mathbb{R}^{2}\times H^1((0,1);\mathbb{R}^2)$ are considered, equipped respectively with norms $\norm{(\bm{Z}, \bm{\zeta}(\cdot))}_{\mathcal{X}}^2 \triangleq \norm{\bm{Z}}_{2}^2 + \norm{\bm{\zeta}(\cdot)}_{L^2((0,1);\mathbb{R}^{2})}^2$ and $\norm{(\bm{Z}, \bm{\zeta}(\cdot))}_{\mathcal{Y}}^2 \triangleq \norm{\bm{Z}}_{2}^2 + \norm{\bm{\zeta}(\cdot)}_{H^1((0,1);\mathbb{R}^{2})}^2$

\begin{theorem}[Global well-posedness]\label{thm:Ulin}
The ODE-PDE system \eqref{eq:originalSystems} admits a unique \emph{mild solution} $(\bm{X},\bm{z}) \in C^0([0,T];\mathcal{X})$ for all initial conditions (ICs) $(\bm{X}_0,\bm{z}_0) \triangleq (\bm{X}(0), \bm{z}(\cdot,0)) \in \mathcal{X}$ and inputs $\bm{\delta} \in L^p((0,T);\mathbb{R}^2)$, $p\geq 1$. If, in addition, $(\bm{X}_0,\bm{z}_0) \in \mathcal{Y}$ satisfies the boundary condition (BC) \eqref{eq:originalSystemsBC}, and $\bm{\delta} \in C^1([0,T];\mathbb{R}^2)$, the solution is \emph{classical}, that is, $(\bm{X}, \bm{z}) \in C^1([0,T];\mathcal{X})\cap C^0([0,T];\mathcal{Y})$ and satisfies the BC \eqref{eq:originalSystemsBC}.

\begin{proof}
See Theorem 4.1 in \cite{SemilinearV}.
\end{proof}
\end{theorem}
For simplicity, this paper considers classical solutions, which ensure that the boundary terms are well-defined and allow the avoidance of overly technical arguments. A schematic of the ODE-PDE system \eqref{eq:originalSystems} is illustrated in Figure \ref{figure:system0}, where, for convenience of notation, $\mathbb{R}^{2} \ni \bm{W}_1(t) \triangleq \mathbf{G}(\mathscr{K}_1\bm{z})(t)$ and $\mathbb{R}^{2} \ni \bm{W}_2(t) \triangleq \mathbf{A}_2\bm{X}(t) + \mathbf{B}\bm{\delta}(t)$.

\begin{figure}
\centering
\includegraphics[width=0.9\linewidth]{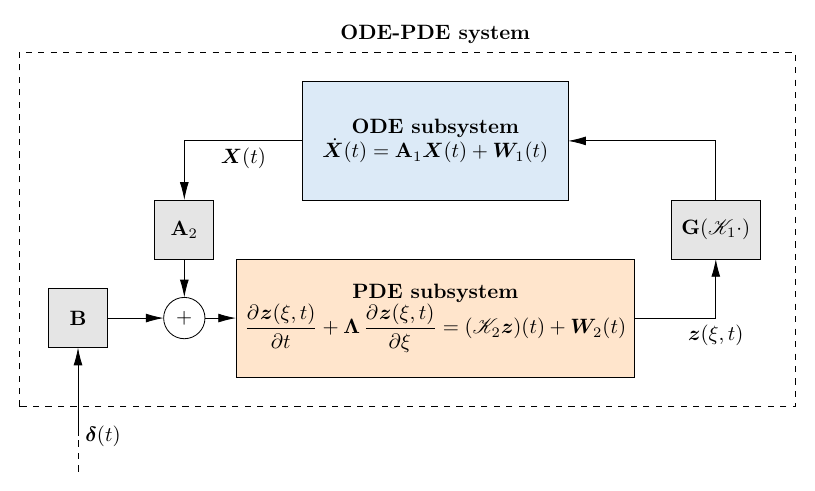} 
\caption{Schematic representation of the ODE-PDE interconnection \eqref{eq:originalSystems}.}
\label{figure:system0}
\end{figure}

The objective of this work consists of reconstructing the variables $\bm{X}(t)$ and $\bm{z}(\xi,t)$ from partial knowledge of the lumped and distributed states. Specifically, in what follows, the measurement output $\mathbb{R}^2 \ni \bm{Y}(t) = [Y_1(t) \; Y_2(t)]^{\mathrm{T}} \triangleq [r(t)\; a_y(t)]^{\mathrm{T}}$ is assumed to be of the form
\begin{align}\label{eq:outputY}
Y_1(t) & \triangleq \mathbf{C}_1\bm{X}(t), && Y_2(t) \triangleq \mathbf{C}_2(\mathscr{K}_1\bm{z})(t),
\end{align}
with $\mathbf{C}_1,\mathbf{C}_2 \in \mathbf{M}_{1\times 2}(\mathbb{R})$ reading
\begin{align}
\mathbf{C}_1& \triangleq \begin{bmatrix} 0 & 1\end{bmatrix}, && \textnormal{and} && \mathbf{C}_2 \triangleq -\begin{bmatrix}\dfrac{1}{m} & \dfrac{1}{m} \end{bmatrix}.
\end{align}
In this context, it is worth noting that the ODE subsystem \eqref{eq:originalSystemsODE}, when considered in isolation with measurement $Y_1(t)$, is not detectable. Therefore, detectability should be checked considering the whole interconnections, using, e.g., spectral methods. To avoid explicit infinite-dimensional detectability checks, this paper proposes an inverse-dynamics-based strategy that enables reconstruction of $\bm{X}(t)$ and $\bm{z}(\xi,t)$ without resorting to spectral theory.
Finally, by estimating the lateral velocity $v_y(t)$, the sideslip angle is then obtained directly from $\mathbb{R}\ni \beta(t) \triangleq v_y(t)/v_x$.

\section{Observer design}\label{sect:observerDesign}
The present section constitutes the core of the manuscript and is devoted to the synthesis of a state observer via dynamical inversion.
\subsection{Observer structure}
Denoting with $\hat{\bm{X}}(t) \in \mathbb{R}^{2}$, $\hat{\bm{z}}(\xi,t) \in \mathbb{R}^{2}$, and $\hat{\bm{Y}}(t) \in \mathbb{R}^{2}$ the estimates of $\bm{X}(t)$, $\bm{z}(\xi,t)$, and $\bm{Y}(t)$, respectively, the following observer structure is proposed:
\begin{subequations}\label{eq:obserStruc}
\begin{align}
\begin{split}
& \dot{\hat{\bm{X}}}(t) = \mathbf{A}_1\hat{\bm{X}}(t) +\mathbf{G}(\mathscr{K}_1\hat{\bm{z}})(t)-\mathbf{L}\tilde{Y}_1(t) \\
& \qquad \quad - \Bigl(\mathcal{H}*\tilde{\bm{Y}}\Bigr)(t), \quad  t \in (0,T), 
\end{split} \label{eq:ObsODE}\\
\begin{split}
& \dpd{\hat{\bm{z}}(\xi,t)}{t} + \mathbf{\Lambda} \dpd{\hat{\bm{z}}(\xi,t)}{\xi} = (\mathscr{K}_2\hat{\bm{z}})(t) + \mathbf{A}_2\hat{\bm{X}}(t)+ \mathbf{B}\bm{\delta}(t),\\
& \qquad \qquad \qquad  \qquad  \qquad  \quad (\xi,t) \in (0,1) \times (0,T),
\end{split} \\
& \hat{\bm{z}}(0,t) = \bm{0}, \quad t \in (0,T),
\end{align}
\end{subequations}
where $\mathbb{R}^{2} \ni \tilde{\bm{Y}}(t) = [\tilde{Y}_1(t) \; \tilde{Y}_2(t)]^{\mathrm{T}} \triangleq \bm{Y}(t)-\hat{\bm{Y}}(t)$, the matrix $\mathbf{L}\in \mathbf{M}_{2\times 1}(\mathbb{R})$ is an observer gain to be selected, and $(\mathcal{H}*\tilde{\bm{Y}})(t) \triangleq \int_0^t \mathcal{H}(t-t^\prime)\tilde{\bm{Y}}(t^\prime)\dif t^\prime$ is a stable linear operator to be appropriately defined.

Indicating the lumped and distributed errors with $\mathbb{R}^{2} \ni \tilde{\bm{X}}(t) \triangleq \bm{X}(t)-\hat{\bm{X}}(t)$ and $\mathbb{R}^{2} \ni \tilde{\bm{z}}(\xi,t) \triangleq \bm{z}(\xi,t) -\hat{\bm{z}}(\xi,t)$, subtracting \eqref{eq:obserStruc} from \eqref{eq:originalSystems}, and specifying $\mathbf{L} \triangleq [v_x \; 0]^{\mathrm{T}}$ to eliminate the lumped variables from the right-hand side of the resulting ODE subsystem, the following ODE-PDE system is obtained describing the error dynamics:
\begin{subequations}\label{eq:ErrorDyn}
\begin{align}
\begin{split}
& \dot{\tilde{\bm{X}}}(t) = \mathbf{G}(\mathscr{K}_1\tilde{\bm{z}})(t) + \Bigl(\mathcal{H}*\tilde{\bm{Y}}\Bigr)(t), \quad  t \in (0,T), 
\end{split} \label{eq:ErrorDynODE} \\
\begin{split}
& \dpd{\tilde{\bm{z}}(\xi,t)}{t} + \mathbf{\Lambda} \dpd{\tilde{\bm{z}}(\xi,t)}{\xi} = (\mathscr{K}_2\tilde{\bm{z}})(t)+\mathbf{A}_2\tilde{\bm{X}}(t),\\
& \qquad \qquad  \qquad  \qquad  \qquad  \quad (\xi,t) \in (0,1) \times (0,T),
\end{split}\label{eq:ErrorDynPDE} \\
& \tilde{\bm{z}}(0,t) = \bm{0}, \quad t \in (0,T). \label{eq:ErrorDynBC}
\end{align} 
\end{subequations}
In components, the output error reads 
\begin{align}\label{eq:errorY}
\tilde{Y}_1(t) & \triangleq \mathbf{C}_1\tilde{\bm{X}}(t), && \tilde{Y}_2(t) \triangleq \mathbf{C}_2(\mathscr{K}_1\tilde{\bm{z}})(t).
\end{align}
The goal consists of appropriately designing the operator $(\mathcal{H}*\tilde{\bm{Y}})(t)$, and hence the injection gain $\mathcal{H}(t)$, so that the error dynamics converges exponentially to zero. To this end, it suffices to ensure the exponential convergence of $\tilde{\bm{X}}(t)$.
\begin{lemma}\label{lemma:2}
Consider the error dynamics \eqref{eq:ErrorDyn} with IC $(\tilde{\bm{X}}_0, \tilde{\bm{z}}_0) \in \mathcal{Y}$ satisfying the BC \eqref{eq:ErrorDynBC}, and suppose that $\tilde{\bm{X}}(t) \to \bm{0}$ exponentially fast; then, $\norm{(\tilde{\bm{X}}(t), \tilde{\bm{z}}(\cdot,t))}_{\mathcal{X}} \to 0$ exponentially fast.

\begin{proof}
From Lemma 3.2 in \cite{TireNonDyn}, the unbounded operator $(\mathscr{A},\mathscr{D}(\mathscr{A}))$, defined by
\begin{subequations}
\begin{align}
(\mathscr{A}\bm{\zeta}) & \triangleq - \mathbf{\Lambda}\dpd{\bm{\zeta}(\xi)}{\xi} + (\mathscr{K}_2\bm{\zeta}), \\
\mathscr{D}(\mathscr{A}) & \triangleq \Bigl\{\bm{\zeta} \in H^1((0,1);\mathbb{R}^2) \mathrel{\Big|} \bm{\zeta}(0) = \bm{0}\Bigr\},
\end{align}
\end{subequations}
generates an exponentially stable $C_0$-semigroup on $L^2((0,1);\mathbb{R}^2)$. Therefore, $\tilde{\bm{X}}(t) \to \bm{0}$ exponentially fast implies the exponential convergence of $\norm{\tilde{\bm{z}}(\cdot,t)}_{L^2((0,1);\mathbb{R}^2)}$.
\end{proof}
\end{lemma}

\subsection{Frequency-domain analysis}\label{sect:freq2}
The operator $(\mathcal{H}*\tilde{\bm{Y}})(t)$ in \eqref{eq:ObsODE} should be designed to compensate the coupling term in \eqref{eq:ErrorDynODE}. This may be achieved by expressing $\tilde{Y}_2(t) = \mathbf{C}_2(\mathscr{K}_1\tilde{\bm{z}})(t)$ as a function of $\tilde{\bm{X}}(t)$, and, in turn, the latter as a function of the output error $\tilde{\bm{Y}}(t)$. 

Indeed, taking the Laplace transform of \eqref{eq:ErrorDynPDE} and solving the resulting ODE equipped with BC \eqref{eq:ErrorDynBC} yields
\begin{align}\label{eq:solLaplUx}
\begin{split}
\widehat{\tilde{\bm{z}}}(\xi,s) & = \mathbf{\Gamma}(\xi,s)\mathbf{\Lambda}^{-1}(\mathscr{K}_2\widehat{\tilde{\bm{z}}})(s) + \mathbf{\Xi}(\xi,s)\widehat{\tilde{\bm{X}}}(s), \quad \xi \in [0,1],
\end{split}
\end{align}
with
\begin{subequations}
\begin{align}
\mathbf{\Phi}(\xi,\tilde{\xi},s) & \triangleq \eu^{-\mathbf{\Lambda}^{-1}s(\xi-\tilde{\xi})}, \\
\mathbf{\Gamma}(\xi,s) & \triangleq \int_0^\xi \mathbf{\Phi}\bigl(\xi,\xi^\prime,s\bigr)\dif \xi^\prime, \\
\mathbf{\Xi}(\xi,s) & \triangleq \mathbf{\Gamma}(\xi,s)\mathbf{\Lambda}^{-1}\mathbf{A}_2.
\end{align}
\end{subequations}
where $\mathbf{\Gamma}(\xi,\cdot), \mathbf{\Xi}(\xi,\cdot) \in H^\infty(\mathbb{C}_{>0};\mathbf{M}_{2}(\mathbb{C}))$ for all $\xi \in [0,1]$. 
In turn, manipulating \eqref{eq:solLaplUx} and solving for $(\mathscr{K}_2\widehat{\tilde{\bm{z}}})(s)$ gives
\begin{align}\label{eq:K2zetaTOX}
(\mathscr{K}_1\widehat{\tilde{\bm{z}}})(s) = \mathbf{H}_1(s)\widehat{\tilde{\bm{X}}}(s),
\end{align}
with
\begin{align}
\mathbf{H}_1(s) \triangleq \begin{bmatrix}\mathbf{I}_2 & \mathbf{0} \end{bmatrix}\begin{bmatrix}\mathbf{I}_2 & -\mathbf{\Theta}_1(s) \\ \mathbf{0} & \mathbf{I}_2-\mathbf{\Theta}_2(s) \end{bmatrix}^{-1}\begin{bmatrix}\mathbf{\Psi}_1(s) \\ \mathbf{\Psi}_2(s) \end{bmatrix},
\end{align}
where
\begin{align}
\mathbf{\Theta}_1(s) & \triangleq (\mathscr{K}_1\mathbf{\Gamma})(s)\mathbf{\Lambda}^{-1}, && \mathbf{\Theta}_2(s)  \triangleq (\mathscr{K}_2\mathbf{\Gamma})(s)\mathbf{\Lambda}^{-1},  \nonumber\\
\mathbf{\Psi}_1(s) & \triangleq(\mathscr{K}_1\mathbf{\Xi})(s), && \mathbf{\Psi}_2(s)  \triangleq(\mathscr{K}_2\mathbf{\Xi})(s).
\end{align}
From the conditions on $\mathbf{\Gamma}(\xi,s)$ and $\mathbf{\Xi}(\xi,s)$, it also follows that $\mathbf{H}_1 \in H^\infty(\mathbb{C}_{>0};\mathbf{M}_{2}(\mathbb{C}))$. Consequently, combining \eqref{eq:K2zetaTOX} with the Laplace transform of \eqref{eq:errorY} yields
\begin{align}\label{eq:XtoY}
\widehat{\tilde{\bm{X}}}(s) = \mathbf{C}^{-1}(s)\widehat{\tilde{\bm{Y}}}(s),
\end{align}
being
\begin{align}\label{eq:Mc}
\mathbf{C}(s) \triangleq \begin{bmatrix} \mathbf{C}_1 \\ \mathbf{C}_2\mathbf{H}_1(s)\end{bmatrix}.
\end{align}
According to Lemma \ref{lemma:stablePoles} below, \eqref{eq:XtoY} permits reconstructing the lumped state error $\tilde{\bm{X}}(t)$ from the available measurement. 
\begin{lemma}\label{lemma:stablePoles}
Consider the matrix $\mathbf{C}(s) \in\mathbf{M}_2(\mathbb{C})$ in \eqref{eq:Mc}. Its inverse $\mathbf{C}^{-1}(s)$ has no unstable poles.

\begin{proof}
See Appendix \ref{app:1}.
\end{proof}
\end{lemma}

It is worth observing that, albeit containing no unstable poles, the transfer function $\mathbf{C}^{-1}$ is not proper, which makes it desirable to filter the output $\tilde{\bm{Y}}(t)$ to achieve a state-space realization for the operator $(\mathcal{H}*\tilde{\bm{Y}})(t)$. In particular, inspection of $\mathbf{C}^{-1}(s)$ reveals that any filter of the type
\begin{align}\label{eq:filter}
\varpi(s) = \dfrac{\gamma}{s + \gamma},
\end{align}
for arbitrary $\gamma \in \mathbb{R}_{>0}$, may be used (higher-order filters may also be employed).

\subsection{Gain injection design and output-feedback control law}
The following injection gain is considered:
\begin{align}\label{eq:injection}
\widehat{\mathcal{H}}(s) & =-\varpi(s)\bigl[\eta\mathbf{I}_2 + \mathbf{G}\mathbf{H}_1(s)\bigr]\mathbf{C}^{-1}(s),
\end{align}
where, with standard notation, $\widehat{\mathcal{H}}(s) \triangleq (\mathcal{L}\mathcal{H})(s)$, the filter $\varpi(s)$ reads according to \eqref{eq:filter}, and $\eta \in \mathbb{R}_{>0}$ is a constant gain.
With the above notation, it is possible to enounce the following result.
\begin{theorem}\label{th:2}
Consider the error dynamics \eqref{eq:ErrorDyn} with IC $(\tilde{\bm{X}}_0, \tilde{\bm{z}}_0) \in \mathcal{Y}$ satisfying the BC \eqref{eq:ErrorDynBC}, along with the injection gain \eqref{eq:injection}, with $\gamma,\eta \in \mathbb{R}_{>0}$ selected such that the transfer function
\begin{align}
\mathbf{H}_2(s) \triangleq \dfrac{s}{s^2 + \gamma s + \eta \gamma}
\end{align}
satisfies 
\begin{align}\label{eq:normCOnd}
\norm{\mathbf{H}_2(\cdot)}_\infty < \dfrac{1}{\norm{\mathbf{H}_1(\cdot)}_\infty}.
\end{align}
Then, the operator $(\mathcal{H}*\tilde{\bm{Y}})(t) = \int_0^t \mathcal{H}(t-t^\prime)\tilde{\bm{Y}}(t^\prime) \dif t^\prime$ with $\widehat{\mathcal{H}}(s) = (\mathcal{L}\mathcal{H})(s)$ as in \eqref{eq:injection} ensures that $\tilde{\bm{X}}(t), \mathbf{G}(\mathscr{K}_1\tilde{\bm{z}})(t) \to \bm{0}$ exponentially fast.

\begin{proof}
First, it may be observed that $\norm{\mathbf{H}_2(\cdot)}_\infty = 1/\gamma$, so that the inequality \eqref{eq:normCOnd} may be ensured by choosing $\gamma \in \mathbb{R}_{>0}$ sufficiently large.
Furthermore, substituting \eqref{eq:injection} into \eqref{eq:ErrorDynODE} and using \eqref{eq:K2zetaTOX} and \eqref{eq:XtoY} provides 
\begin{align}\label{eq:XtoFtilde}
\widehat{\tilde{\bm{X}}}(s) = \mathbf{H}_2(s)\mathbf{G}(\mathscr{K}_1\widehat{\tilde{\bm{z}}})(s),
\end{align}
where $\mathbf{H}_2 \in H^\infty(\mathbb{C}_{>0};\mathbf{M}_{2}(\mathbb{C}))$ for all $\gamma,\eta \in \mathbb{R}_{>0}$. In turn, inserting \eqref{eq:XtoFtilde} into \eqref{eq:K2zetaTOX} yields
\begin{align}\label{eq:XtoFtilde33}
\bigl(\mathbf{I}_2-\mathbf{H}_1(s)\mathbf{H}_2(s)\bigr)\mathbf{G}(\mathscr{K}_1\widehat{\tilde{\bm{z}}})(s) = \bm{0}.
\end{align}
The requirement \eqref{eq:normCOnd} ensures that $\norm{\mathbf{H}_1(\cdot)\mathbf{H}_2(\cdot)}_\infty < 1$, which is a sufficient condition for stability. Utilizing \eqref{eq:XtoFtilde} and \eqref{eq:XtoFtilde33}, it may be concluded that $\tilde{\bm{X}}(t), \mathbf{G}(\mathscr{K}_1\tilde{\bm{z}})(t) \to \bm{0}$ exponentially fast.
\end{proof}
\end{theorem}
Combining the assertions of Lemma \ref{lemma:2} and Theorem \ref{th:2}, the proposed observer \eqref{eq:obserStruc} guarantees the exponential stabilization of the lumped and distributed error states.
Accordingly, the main result of the paper is formalized below.
\begin{theorem}\label{th:3}
Under the same assumptions as Theorem \ref{th:2}, the observer \eqref{eq:obserStruc} with $\mathbf{M}_{2\times 1}(\mathbb{R}) \ni \mathbf{L} \triangleq [v_x\; 0]^{\mathrm{T}}$ ensures that $\norm{(\tilde{\bm{X}}(t),\tilde{\bm{z}}(\cdot,t))}_{\mathcal{X}} \to 0$ exponentially fast.
\end{theorem}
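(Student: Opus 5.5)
The result follows by chaining Theorem \ref{th:2} and Lemma \ref{lemma:2}. The remaining work is to check that their hypotheses are met by the actual observer \eqref{eq:obserStruc} with $\mathbf{L} = [v_x\; 0]^{\mathrm{T}}$.

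First I would check that this choice of $\mathbf{L}$ produces exactly the error dynamics \eqref{eq:ErrorDyn}. Subtracting \eqref{eq:obserStruc} from \eqref{eq:originalSystems}, the lumped equation becomes $\dot{\tilde{\bm{X}}} = \mathbf{A}_1\tilde{\bm{X}} + \mathbf{L}\mathbf{C}_1\tilde{\bm{X}} + \mathbf{G}(\mathscr{K}_1\tilde{\bm{z}}) + (\mathcal{H}*\tilde{\bm{Y}})$. Since $\mathbf{L}\mathbf{C}_1 = \begin{bmatrix} 0 & v_x\\ 0 & 0\end{bmatrix} = -\mathbf{A}_1$, the term $\mathbf{A}_1\tilde{\bm{X}}$ cancels, which gives \eqref{eq:ErrorDynODE}. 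The PDE part \eqref{eq:ErrorDynPDE}, together with the homogeneous boundary condition \eqref{eq:ErrorDynBC}, follows directly because $\bm{\delta}$ enters both systems identically.

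Next I would address regularity of the error system, since both earlier results assume a classical solution. If the plant and observer initial conditions lie in $\mathcal{Y}$ and satisfy their boundary conditions, then $(\tilde{\bm{X}}_0,\tilde{\bm{z}}_0)\in\mathcal{Y}$ and satisfies \eqref{eq:ErrorDynBC}. The filtered injection $\varpi(s)[\eta\mathbf{I}_2+\mathbf{G}\mathbf{H}_1(s)]\mathbf{C}^{-1}(s)$ is proper and, by Lemma \ref{lemma:stablePoles}, has no unstable poles, so $\mathcal{H}*\tilde{\bm{Y}}$ is a bounded causal operator. I would argue well-posedness of the closed-loop error system in the same spirit as Theorem \ref{thm:Ulin}: realize the filter in state space, treat the injection as a bounded perturbation of the generator, and invoke Lumer--Phillips or a bounded-perturbation argument.

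With the error dynamics and regularity in place, Theorem \ref{th:2} applies directly and gives that $\tilde{\bm{X}}(t)\to\bm{0}$ exponentially fast. Lemma \ref{lemma:2} then uses this convergence as its hypothesis. The semigroup generated by $(\mathscr{A},\mathscr{D}(\mathscr{A}))$ is exponentially stable, so by variation of constants
\[
\tilde{\bm{z}}(\cdot,t) = \mathbb{T}(t)\tilde{\bm{z}}_0 + \int_0^t \mathbb{T}(t-\tau)\mathbf{A}_2\tilde{\bm{X}}(\tau)\dif\tau,
\]
where $\mathbb{T}$ denotes the semigroup and $\mathbf{A}_2\tilde{\bm{X}}$ is viewed as a spatially constant forcing. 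The convolution of two exponentially decaying functions decays exponentially, at a rate slightly below the smaller of the two rates. Adding the $L^2$ norm of $\tilde{\bm{z}}$ to $\norm{\tilde{\bm{X}}}_2$ then yields the claim in the $\mathcal{X}$ norm.

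The main obstacle is the regularity step: confirming that the observer with the nonrational convolution injection admits classical solutions, so that the frequency-domain identities used in Theorem \ref{th:2} are legitimate in the time domain. I would handle this by a Laplace-transform argument on the $H^\infty$ transfer functions, relying on the fact that a proper, stable $H^\infty$ transfer function defines a bounded causal operator on $L^2$.
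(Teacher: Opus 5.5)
Your proposal is correct and follows the paper's own argument, which simply combines Theorem~\ref{th:2} (exponential decay of $\tilde{\bm{X}}(t)$) with Lemma~\ref{lemma:2} (the exponentially stable tire semigroup carries this decay over to $\norm{\tilde{\bm{z}}(\cdot,t)}_{L^2((0,1);\mathbb{R}^2)}$). Your check that $\mathbf{L}\mathbf{C}_1=-\mathbf{A}_1$ and your regularity discussion spell out what the paper leaves implicit: the classical-IC requirement is already included in ``the same assumptions as Theorem~\ref{th:2}'', and the paper takes closed-loop well-posedness with the convolution injection for granted.
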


Theorem \ref{th:3} concludes the theoretical part of the paper; the next Section \ref{sect:example} exemplifies the proposed estimation strategy by adducing a numerical example.

\section{Simulation results}\label{sect:example}

The numerical values for the model parameters of the example discussed below are listed in Table \ref{tab:parameters2}. Specifically, the following numerical results refer to simulations conducted in MATLAB/Simulink\textsuperscript{\textregistered} environment. The linear PDE subsystem was solved numerically using a finite difference scheme with a discretization step of $0.02$, and combined with a time-marching algorithm with a fixed time step of $10^{-6}$ s. The ICs for the actual system were set to $\bm{X}_0 = [0.03\; -0.25]^{\mathrm{T}}$, and $\bm{z}_0(\xi) = [0.0033\; 0.0033]^{\mathrm{T}}$ (corresponding to $\norm{\bm{z}_0(\cdot)}_{L^2((0,1);\mathbb{R}^2)} = 0.0042$), whereas those for the observer to $\hat{\bm{X}}_0 = \bm{0}$ and $\hat{\bm{z}}_0(\xi) = \bm{0}$ (it is worth observing that the \emph{compatibility conditions} $\bm{z}_0(0) = \tilde{\bm{z}}_0(0)$ are not satisfied). The yaw rate and lateral acceleration measurements were contaminated with additive white noise having standard deviations of 0.01 $\text{rad}\,\text{s}^{-1}$ and 0.5 $\text{m}\,\text{s}^{-2}$, respectively, and sample times of 0.005 and 0.01 s, reflecting the levels typically observed in standard automotive sensors. 
\begin{table}[h]
\centering
\begin{tabular}{|c|c|c|c|}
\hline
\textbf{Parameter} & \textbf{Description} & \textbf{Unit} & \textbf{Value} \\
\hline 
$v_x$ & Longitudinal speed & $\textnormal{m}\,\textnormal{s}^{-1}$ & $50$ \\ 
$m$ & Vehicle mass & kg & 1300 \\ 
$I_z$ & Vertical moment of inertia  & $\textnormal{kg}\,\textnormal{m}^{2}$ & 2000 \\
$l_1$ & Front axle length & m & 1.4  \\
$l_2$ & Rear axle length & m & 1 \\
$F_{z1}$ & Front vertical force & N & $2660$ \\
$F_{z2}$ & Rear vertical force & N & $3720$ \\
$L_1$ & Front contact patch length & m & 0.1 \\
$L_2$ & Rear contact patch length & m & 0.1 \\
$\sigma_{1}$ & Front micro-stiffness & $\textnormal{m}^{-1}$ & 263 \\
$\sigma_{2}$ & Rear micro-stiffness & $\textnormal{m}^{-1}$ & 242 \\
$\phi_{1}$ & Front structural parameter & - & 0.92 \\
$\phi_{2}$ & Rear structural parameter & - & 0.92 \\
$\chi$ & Rear steering actuation & -& 0\\
\hline
\end{tabular}
\caption{Model parameters}
\label{tab:parameters2}
\end{table}

With the model parameters listed in Table \ref{tab:parameters2}, the linear single-track model described by \eqref{eq:originalSystems} is oversteer, and thus intrinsically unstable for sufficiently high longitudinal speeds, as shown in Figure \ref{fig:1} for a moderate steering maneuver with $\delta_1(t) = 1 + 2\sin(10t)$ and $\delta_2(t) = 0$. 

The observer synthesized according to the strategy outlined in Section \ref{sect:observerDesign} successfully reconstructs the lumped and distributed variables, as illustrated in Figure \ref{fig:2}, where the dynamics of the norms $\norm{(\bm{X}(t),\bm{z}(\cdot,t))}_{\mathcal{X}}$, $\norm{(\hat{\bm{X}}(t),\hat{\bm{z}}(\cdot,t))}_{\mathcal{X}}$, and $\norm{(\tilde{\bm{X}}(t),\tilde{\bm{z}}(\cdot,t))}_{\mathcal{X}}$ is visualized for $\eta=1$ and $\gamma = 500$. In particular, a satisfactory estimate is already achieved for $t \approx 0.5$ s, with the observer error converging rapidly to zero, except for some minor oscillations caused by the measurement noise. The evolution of the distributed state $\tilde{z}_1(\xi,t)$ is instead plotted in Figure \ref{ffig:3}. The conclusions that may be drawn by inspecting Figure \ref{ffig:3} are essentially the same as previously: the error $\tilde{z}_1(\xi,t)$ is stabilized close to zero, with residual oscillations introduced by the external noise. In this context, it is worth noting that the rate of convergence of the distributed states is dictated mainly by the behavior of the PDE subsystem \eqref{eq:originalSystemsPDE}, and cannot be controlled arbitrarily without measurement injection. The dynamics of the error variable $\tilde{z}_2(\xi,t)$ is similar, and not reported.

\begin{figure}
\centering
\includegraphics[width=1\linewidth]{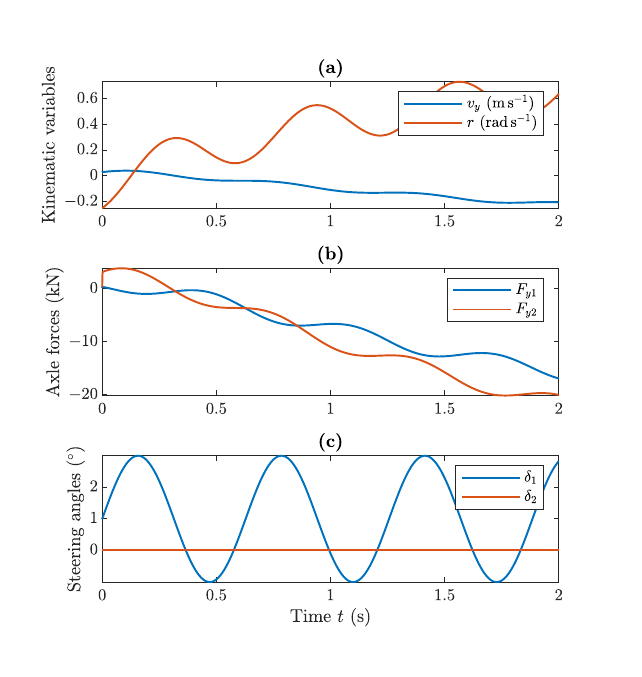} 
\caption{Behavior of the lumped states and steering inputs, for $\eta=1$ and $\gamma = 500$: \textbf{(a)} kinematic variables; \textbf{(b)} axle forces; \textbf{(c)} steering inputs.}
\label{fig:1}
\end{figure}

\begin{figure}
\centering
\includegraphics[width=1\linewidth]{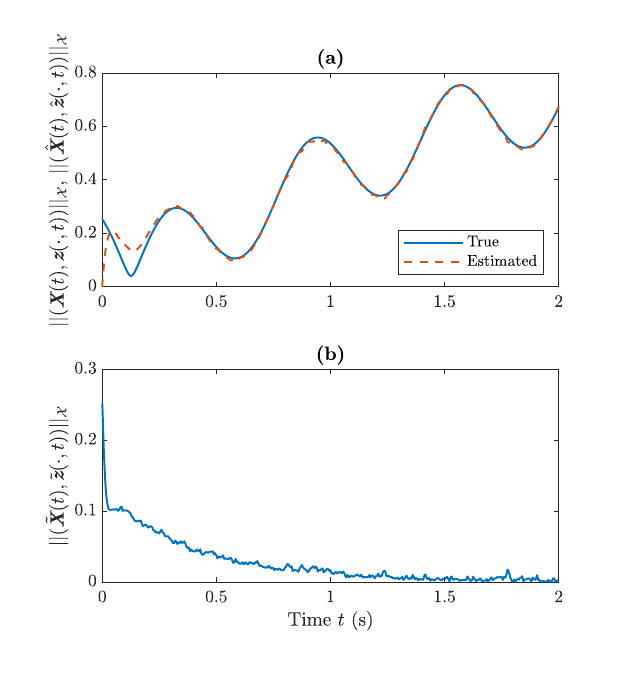} 
\caption{Dynamics of $\norm{(\bm{X}(t),\bm{z}(\cdot,t))}_{\mathcal{X}}$, $\norm{(\hat{\bm{X}}(t),\hat{\bm{z}}(\cdot,t))}_{\mathcal{X}}$, and $\norm{(\tilde{\bm{X}}(t),\tilde{\bm{z}}(\cdot,t))}_{\mathcal{X}}$ for $\eta=1$ and $\gamma = 500$.}
\label{fig:2}
\end{figure}

\begin{figure}
\centering
\includegraphics[width=0.9\linewidth]{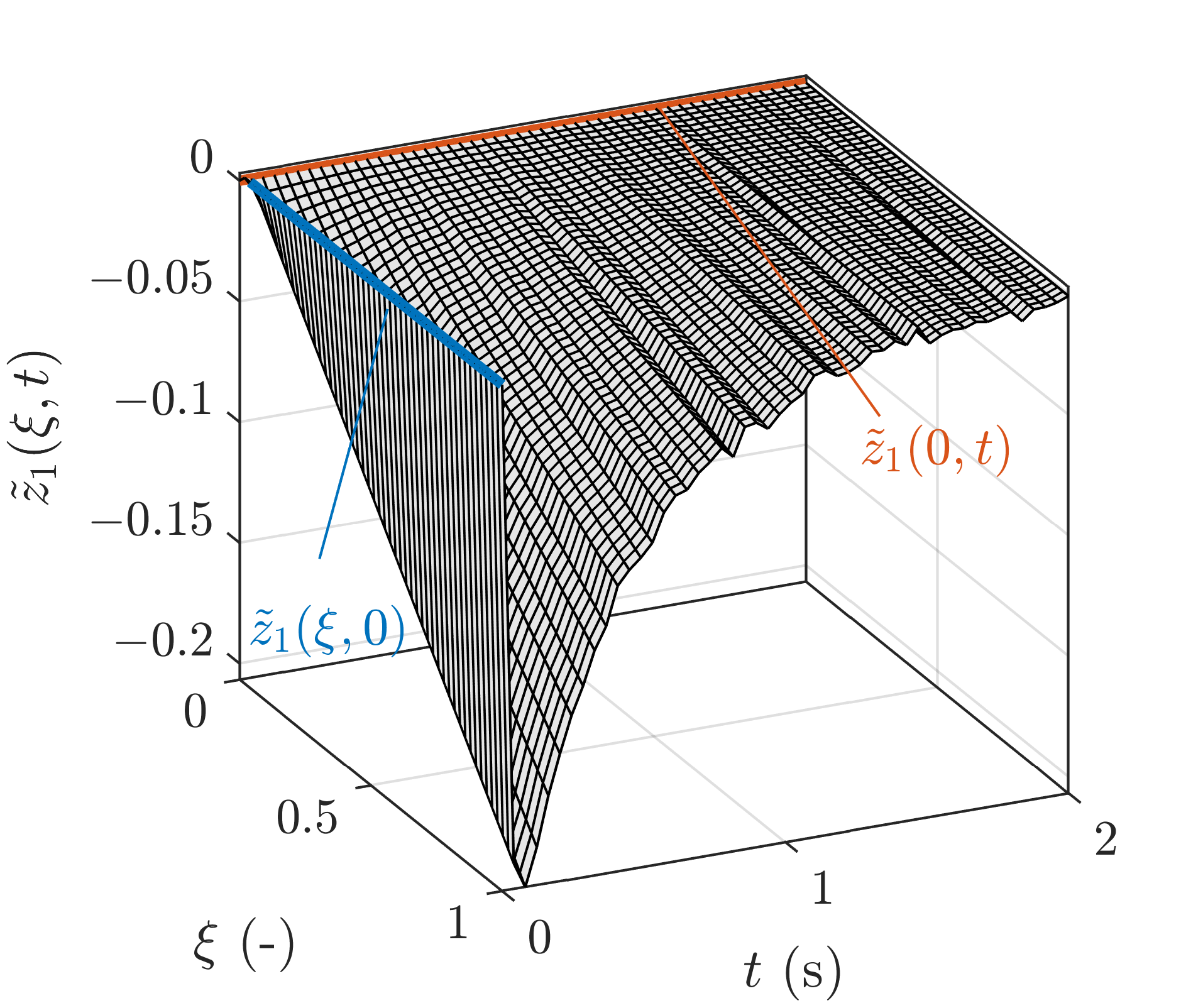} 
\caption{Evolution of the PDE state $\tilde{z}_1(\xi,t)$, along with its IC (blue line) and BC (orange line).}
\label{ffig:3}
\end{figure}

\section{Conclusions}\label{ref:concl}
An inverse-dynamics observer for a linear single-track vehicle model with distributed tire dynamics has been presented. By coupling lumped vehicle dynamics with hyperbolic PDE-based tire models, the proposed framework enables accurate estimation of both lumped and distributed states using standard sensor measurements. Theoretical analysis guarantees exponential convergence of the estimation errors through a dynamical inversion design.

Simulation results confirmed the observer's effectiveness in reconstructing the sideslip angle, tire forces, and distributed tire states, under non-ideal conditions. Future work will explore output-feedback control strategies and experimental validation on real vehicles.

\appendix
Some additional details are collected below.
\subsection{Proof of Lemma \ref{lemma:stablePoles}}\label{app:1}
This appendix contains a concise proof of Lemma \ref{lemma:stablePoles}. Lemma \ref{LemmaLambert} is propaedeutic to the result. 
\begin{lemma}[Shinozaki and Mori \cite{LambertAutomatica}]\label{LemmaLambert}
Consider the Lambert function $W : \mathbb{C} \mapsto \mathbb{C}$ with branches $W_k(\cdot)$, $k \in \mathbb{Z}$. Then, for all $(\rho,\theta) \in \mathbb{R}_{>0} \times(-\textnormal{\textpi}, \textnormal{\textpi}]$, the following inequalities hold:
\begin{subequations}
\begin{align}\label{eq:inequalityLambert}
\RE\Bigl[W_{k}\bigl(\rho\eu^{\im \theta}\bigr)\Bigr] &< \RE\Bigl[W_{-1}\bigl(\rho\eu^{\im \theta}\bigr)\Bigr], && k < -1, \\
\RE\Bigl[W_{k}\bigl(\rho\eu^{\im \theta}\bigr)\Bigr] &< \RE\Bigl[W_{1}\bigl(\rho\eu^{\im \theta}\bigr)\Bigr], && k > 1.
\end{align}
\end{subequations}
Moreover, for all $(\rho,\theta) \in (0,1/\eu) \times(-\textnormal{\textpi}, \textnormal{\textpi}]$,
\begin{align}\label{eq:inequalityLambert}
\RE\Bigl[W_{k}\bigl(\rho\eu^{\im \theta}\bigr)\Bigr] < \RE\Bigl[W_{0}\bigl(\rho\eu^{\im \theta}\bigr)\Bigr], && k \in \mathbb{Z}\setminus\{0\}.
\end{align}
\begin{proof}
See \cite{LambertAutomatica}.
\end{proof}
\end{lemma}

By invoking Lemma \ref{LemmaLambert}, it is possible to prove \ref{lemma:stablePoles}.
\begin{proof}[Proof of Lemma \ref{lemma:stablePoles}] 
Lengthy but straightforward calculations give
\begin{align}\label{eq:detC3}
\det\bigl(\mathbf{C}(s)\bigr) = F_{z1}\sigma_1\phi_1h_1(s) +  F_{z2}\sigma_2\phi_2h_2(s),
\end{align}
where
\begin{align}
h_i(s) \triangleq \dfrac{\frac{s}{\lambda_i}+\eu^{-s/\lambda_i}-1}{s^2/\lambda_i}\left[1-\psi_i\Biggl(\dfrac{1-\eu^{-s/\lambda_i}}{s/\lambda_i} \Biggr) \right]^{-1},
\end{align}
with $\lambda_i \triangleq v_x/L_i$, $i \in \{1,2\}$. First, it may be observed that
\begin{align}
\lim_{s\to 0} \lambda_i(s) = \dfrac{1}{2\lambda_i\phi_i} >0, \quad i \in \{1,2\}.
\end{align}
Moreover, by setting $\rho \triangleq 1/\eu$, the zeros of the functions $h_i \in H^\infty(\mathbb{C}_{>0};\mathbb{C})$ may be calculated as
\begin{align}\label{eq:solW+1}
s =\lambda_i\Bigl[W_{k}\bigl(\rho\eu^{\im \text{\textpi}}\bigr) + 1\Bigr], && k \in \mathbb{Z} \setminus\{-1,0\}.
\end{align}
Lemma \ref{LemmaLambert} implies $\RE(s) < 0$ for all $s$ calculated according to \eqref{eq:solW+1}, and consequently that $\RE(h_i(s)) > 0$ for all $s\in \mathbb{C}_{>0}$, $i\in \{1,2\}$. In turn, since the coefficients multiplying $h_1(s)$ and $h_2(s)$ in \eqref{eq:detC3} are both positive, this ensures that $\mathbf{C}^{-1}(s)$ has no unstable poles.
\end{proof}

\end{document}